\newtheorem{theorem}{Theorem}
\newtheorem{lemma}{Lemma}
\newtheorem{corollary}{Corollary}
\begin{document}

\title{Jammed systems of oriented dimers always percolate on hypercubic lattices}
\author{Zbigniew Koza}\email{zbigniew.koza@uwr.edu.pl}
\author{Grzegorz Kondrat}
\affiliation{Faculty of Physics and Astronomy, University of Wroc{\l}aw, 50-204
Wroc{\l}aw,
Poland}

\date{\today}

\begin{abstract}
Random sequential adsorption (RSA) is a standard method of modeling adsorption
of large molecules at the liquid-solid interface.
Here we consider jammed states of the RSA process
of nonoverlapping dimers  (objects occupying two  nearest-neighbor lattice sites) in a hypercubic lattice of arbitrary
space dimension $D \ge 2$. We show that each dimer in such a state
belongs to a percolating cluster.
\end{abstract}

\pacs{
    05.50.+q %Lattice theory and statistics (Ising, Potts, etc.)
    64.60.A- %Specific approaches applied to studies of phase transitionss
     }

\maketitle

%%%%%%%%%%%%%%%%%%%%%%%%%%%%%%%%%%%%%%%%%%%%%%%%%%%%%%%
%%%%%%%%%%%%%%%%%%%%%%%%%%%%%%%%%%%%%%%%%%%%%%%%%%%%%%%
%%%%%%%%%%  Section 1. Introduction   %%%%%%%%%%%%%%%%%
%%%%%%%%%%%%%%%%%%%%%%%%%%%%%%%%%%%%%%%%%%%%%%%%%%%%%%%
%%%%%%%%%%%%%%%%%%%%%%%%%%%%%%%%%%%%%%%%%%%%%%%%%%%%%%%

\section{Introduction\label{sec:intro}}

Adsorption of finite size objects plays an important role in various processes in biology,
science and technology. Among them are production of conducting nanocomposites
\cite{Mutiso2015}, water purification \cite{Dabrowski2001} and protein adsorption
on the liquid-solid interface \cite{Talbot2000}, to mention a few.
Often the circumstances allow one to consider these processes as
irreversible, e.g., when a monolayer is being formed
on a target surface~\cite{Feder1980a}.
Introduced by Feder~\cite{Feder1980a}, the random sequential adsorption
(RSA) serves as a paradigm in modeling such processes. In the basic setup
of this method a sequence of identical  extended geometrical objects
(discs, rectangles, line segments, or more complex shapes) is
irreversibly placed one by one at random position and with a random orientation
on an initially empty surface \cite{Feder1980a,Evans1993,Talbot2000}.
A single adsorption trial is successful if the new object does not overlap
with any previously deposited one, otherwise the trial is rejected.
Once an object is adsorbed, it stays at the same position forever.
In this process the rate of successful
attempts drops down steadily as the area accessible for adsorption decreases.
At some moment the dynamics stops, as there is no free place that can accommodate
one more object---the system has reached the jamming limit~\cite{Torquato2010}.

Numerous variants of this simple model have been studied, including deposition of 
various object shapes (discs \cite{Feder1980a}, ellipses \cite{Viot1992},
squares \cite{Brosilow1991,Nakamura1986,Ramirez-Pastor19},
rectangles \cite{Viot1992,Porto2000,Vandewalle2000,Kondrat2001,Tarasevich2012,
Tarasevich2015,Centres2015}, spheres \cite{Meakin1992}, spherocylinders \cite{Schilling2015},
cubes \cite{Buchini19}, square tiles \cite{Pasinetti19}, line segments of zero width
\cite{Provatas2000}), shape flexibility \cite{Adamczyk2008,Kondrat2002},
polydispersity \cite{Lee2000,Nigro2013,Chatterjee2015},
imperfect substrates \cite{Tarasevich2015,Centres2015},
post-adsorption dynamics \cite{Budinski2001,Lebovka2017},
as well as partial \cite{Balberg1987} or full object overlapping \cite{Torquato2012}.

Two  quantities characterize an RSA process: the jamming and
percolation thresholds.
The jamming threshold $0 < c_\mathrm{j} \le 1$ is defined
as the ratio of the area of the occupied space to the area of the whole substrate,
and measures the density of the adsorbate when the RSA process
has come to an end. The percolation threshold $0 < c_\mathrm{p} \le c_\mathrm{j}$
is similar to $c_\mathrm{j}$ except that it is determined
at the moment when the adsorbed objects for the first time form a connected network
that spans the whole system~\cite{Stauffer1994}.
%In some cases (e.g.\ for nonoverlapping discs on a continuous plane)
%the probability that a pair of adsorbed objects touches each other is zero;
%nevertheless, the percolation threshold can still be well-defined after introducing
%a soft shell around the hard core of each object \cite{Balberg1984,Balberg1987}.

%There has been a continuous interest in understanding the interplay between percolation
%and jamming in RSA of various objects on regular lattices. As it was mentioned above,
%on the square lattice RSA of needles ($1 \times k$ rectangles) always leads to percolation.
%On the same lattice in the case of

The relation between percolation and jamming thresholds was a subject of many studies.
For the RSA of squares of size $k\times k$  on a square lattice,
percolation is observed only for $k\leq 3$ \cite{Nakamura1986,Ramirez-Pastor19}.
This was generalized to arbitrary rectangles $k_1 \times k_2$,
and the percolation was found to appear only if
$c_\mathrm{j}$ is greater
than the percolation threshold of overlapping disks on a plane \cite{Porto2000}.
Simulations suggest that needles $1 \times 1 \times k$
always percolate on the cubic lattice \cite{Garcia2013,Garcia2015},
with $c_\mathrm{p}/c_\mathrm{j}\rightarrow 0$ as $k\rightarrow\infty$.
There were also studies of RSA with other cuboids:
plates of size $1 \times k \times k$   (all shapes percolate) \cite{Pasinetti19}
and cubes $k \times k \times k$  (shapes
with $k>16$ do not percolate) \cite{Buchini19}. In all these studies the problem of whether
some RSA systems percolate was investigated in  the thermodynamic limit of
the system size going to infinity. In most of the models the probability that
a finite system at a jammed state contains a percolating cluster is neither
0 nor 1, and only in the thermodynamic limit does it converge to 0 (no percolation)
or 1 ($c_\mathrm{p}$ is well-defined).

The case that has attracted the most attention
in the context of the value of
$c_\mathrm{p}/c_\mathrm{j}$ is the RSA of rectangles $1 \times k$ (``needles'')
on a square lattice. Initially it was claimed that this ratio
is independent of $k$~\cite{Vandewalle2000}.
If true, it would imply some deep relation between the
two apparently different phenomena of percolation and jamming.
However, further studies showed that while
the ratio $c_\mathrm{p}/c_\mathrm{j}$ stays almost constant
for short needles, for longer $k$ ($15 \leq k \leq 45$)
it grows as $0.50 + 0.13 \log k$~\cite{Kondrat2001}.
Extrapolation of this phenomenological formula to $k\to\infty$,
supported by numerical data for much longer needles, $k\leq 512$,
led to the hypothesis about the percolation breakdown for sufficiently
long needles \cite{Tarasevich2012}.
Indeed, if $c_\mathrm{p}/c_\mathrm{j}$ increases logarithmically with $k$,
a critical length $k_*$ must exist such that $c_\mathrm{p}/c_\mathrm{j}>1$
for $k>k_*$, which is impossible, as $c_\mathrm{p} \le c_\mathrm{j}$ must by definition
hold for all models where $c_\mathrm{p}$ is well-defined.
It was therefore suggested that if the needles are sufficiently
long ($k\geq k_*$) the system jams before it can percolate
and $c_\mathrm{p}$ is undefined.
Several subsequent studies of RSA
on imperfect substrates supported this observation \cite{Tarasevich2015,Centres2015},
all predicting the value of the critical length $k_*$
to be of order of several thousand lattice constants.
However, this conjecture was disproved  when a rigorous proof was given
that percolation sets in for all needle lengths~\cite{Kondrat17}.  Subsequently,
extensive computer simulations revealed that for very long needles the ratio
$c_\mathrm{p}/c_\mathrm{j}$ actually departs from the logarithmic dependence
on $k$ \cite{Slutskii18} .

%However the only strict proof of percolation
%existence is for 2D needles $1 \times k$ on a square lattice \cite{Kondrat17}.

Here we address the question of existence of the percolation threshold for RSA
of dimers (``needles'' of length 2) on an arbitrary $D$-dimensional hypercubic lattice.
While some simulation results for this model have already been  obtained
for $D = 2,3$~\cite{Loscar2006,Tarasevich2007,Lebrecht13},
our goal is to tackle the problem rigorously. Our main result states
that percolation in this model is not a statistical property characterizing
the ensemble of RSA realizations in the thermodynamic limit,
%as is the case for several
%other RSA variants,  (e.g. sufficiently large squares on a square
%lattice~\cite{Nakamura1986,Ramirez-Pastor19}),
but occurs for \emph{every} RSA process in this model, including finite systems.
In other words, one could drop ``random'' in the definition of RSA and still
be sure that each complete realization of the process will contain a percolating cluster.

%%%%%%%%%%%%%%%%%%%%%%%%%%%%%%%%%%%%%%%%%%%%%%%%%%%%%%%%%%%%%%%%%%%%%%%%%%%%%%
%%%%%%%%%%%%%%%%%%%%%%%%%%%%%%%%%%%%%%%%%%%%%%%%%%%%%%%%%%%%%%%%%%%%%%%%%%%%%%
%%%%%%%%%%%%%%%%    SECTION 2: Proof            %%%%%%%%%%%%%%%%%%%%%%%%%%%%%%
%%%%%%%%%%%%%%%%%%%%%%%%%%%%%%%%%%%%%%%%%%%%%%%%%%%%%%%%%%%%%%%%%%%%%%%%%%%%%%
%%%%%%%%%%%%%%%%%%%%%%%%%%%%%%%%%%%%%%%%%%%%%%%%%%%%%%%%%%%%%%%%%%%%%%%%%%%%%%

\section{Theorem and its proof\label{sec:proof}}

We will prove the following

\begin{theorem}\label{theorem::1}
Every jammed configuration of nonoverlapping dimers on a
finite $D$-dimensional hypercubic lattice ($D \ge 2)$ contains a connected cluster spanning
two opposite edges of the lattice.
\end{theorem}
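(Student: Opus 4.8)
The plan is to prove a statement slightly stronger than, and clearly implying, the theorem: in any jammed configuration the set $O$ of occupied sites is a \emph{single} connected cluster. This matches the picture advertised in the abstract (every dimer lies in one percolating cluster) and immediately yields spanning in \emph{every} coordinate direction. First I would record the only combinatorial consequence of jamming that I intend to use: a configuration admits no further dimer precisely when no two empty sites are nearest neighbors, i.e. the set $E$ of empty sites is an \emph{independent set} of the lattice graph. Two facts follow at once. (i) No boundary face can be entirely empty: a face is itself a $(D-1)$-dimensional grid (here $D\ge 2$ is used) and hence contains a pair of adjacent sites, which cannot both lie in $E$; so $O$ meets every face. (ii) If $O$ is connected, then, meeting both faces $x_1=1$ and $x_1=L_1$, it contains an occupied path between them, and likewise for every axis. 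It therefore suffices to prove that $O$ is connected.

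I would establish connectivity by contradiction. Suppose $O$ splits into at least two components and let $A,B$ be two of them realizing the minimum $\ell_1$-distance $d$ between distinct components. Along a geodesic joining a closest pair $a\in A$, $b\in B$, every interior site must be empty, since an interior occupied site would belong to a component strictly closer to one endpoint than $d$. Hence if $d\ge 3$ the geodesic would contain two consecutive, and thus adjacent, empty sites, contradicting the independence of $E$. This forces $d=2$, and in either geodesic shape one locates an empty site $m$ that is simultaneously adjacent to a site of $A$ and to a site of $B$; all $2D$ neighbors of $m$ are then occupied.

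The crux is to exploit the perfect \emph{dimer matching} of $O$ --- not merely the independence of $E$ --- in the neighborhood of $m$. Inspecting the dimers covering the transverse neighbors $m\pm e_k$ (with $k$ not the separating axis), I expect to show that these occupied sites either connect $A$ to $B$ through a common occupied neighbor, merging the two components and giving a contradiction, or else force a second empty site adjacent to an existing one, contradicting jamming. The delicate part --- and the place where $D\ge 2$ is indispensable --- is the \emph{collinear} subcase, in which $m$ sits directly between $a$ and $b=a+2e_1$. Here the two components share no occupied neighbor along the axis, and the independence of $E$ alone does not close the argument; indeed in one dimension exactly this configuration yields a jammed state with disconnected $O$, so the hypothesis $D\ge 2$ must enter essentially.

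I anticipate that the collinear case is the main obstacle. In $D\ge 2$ there is a transverse direction $e_2$ available, and the resolution should come from tracking the dimers on $m\pm e_2$ together with the sites $a\pm e_2$ and $b\pm e_2$: a perfect matching of the dense occupied region cannot satisfy the resulting local constraints without either bridging the two components transversally or creating an adjacent pair of empty sites. Making this transverse bookkeeping airtight --- equivalently, showing that no jammed, perfectly matched configuration can sustain a separating ``staircase'' of isolated empties --- is where the real work lies. An alternative route that sidesteps full connectivity is to construct the crossing path directly, advancing in the $+e_1$ direction and detouring through the guaranteed-occupied neighbors of any empty site that blocks the way; the same collinear obstruction reappears there as the step whose validity rests on the perfect matching.
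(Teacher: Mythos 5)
Your reduction to minimal inter-cluster distance $d=2$ is sound, but the statement you set out to prove is false, so the plan cannot be completed: the occupied set $O$ of a jammed dimer configuration need \emph{not} be connected. Concretely, on the $6\times2$ lattice place the four horizontal dimers $\{(1,1),(2,1)\}$, $\{(2,2),(3,2)\}$, $\{(4,1),(5,1)\}$, $\{(5,2),(6,2)\}$, leaving $(3,1)$, $(1,2)$, $(4,2)$, $(6,1)$ empty. No two empty sites are nearest neighbors, so this configuration is jammed, yet $O$ splits into the two clusters $\{(1,1),(2,1),(2,2),(3,2)\}$ and $\{(4,1),(5,1),(5,2),(6,2)\}$: the only site pairs that could join them, $\bigl((3,1),(4,1)\bigr)$ and $\bigl((3,2),(4,2)\bigr)$, each contain an empty site. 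The separating set $\{(3,1),(4,2)\}$ is exactly the diagonal ``staircase of isolated empties'' (note the equal parity of $n_1+n_2$) whose impossibility you hoped to establish in the collinear case; since it occurs in genuine jammed configurations, no amount of transverse bookkeeping can exclude it. The pattern tiles horizontally (a $3k\times L_2$ system can have $k$ distinct clusters), and the paper's own Fig.~2 shows a jammed $10\times4$ configuration with three clusters. Your claimed corollary ``connectivity implies spanning in every coordinate direction'' is doubly moot: in the example no cluster joins the left and right edges, so direction-specific spanning also fails; what is true, and what the theorem asserts, is only that each cluster spans \emph{some} pair of opposite faces.

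Because disconnected jammed states exist, the theorem must be proved cluster by cluster, and the parity structure you bypassed becomes the engine of the proof. The paper's route is: (a) the empty sites bordering a fixed cluster $C$ are connected through next-nearest (diagonal) steps and hence share the parity of $n_1+\cdots+n_D$ (Lemma 1); (b) with $m=\max_{n\in C}\sum_i n_i$, $A=\{n\in C:\sum_i n_i=m\}$ and $B=\{n\in C:\sum_i n_i=m-1\}$, a cluster not reaching the boundary would satisfy $\|A\|>\|B\|$ (translates $B+\varepsilon_i$ are distinct subsets of $A$), while the dimer matching forces $\|A\|\le\|B\|$ (each $A$-site's dimer partner lies in $B$); this contradiction forces every cluster to touch the boundary of $V$ (Lemmas 2 and 3); (c) a staircase walk inside the extremal diagonal hyperplane shows that each of the $2^D$ extremal diagonal faces of a cluster's bounding polyhedron touches at least $D-1$ faces of $V$ (Lemma 4), and choosing, for a supposed non-spanning cluster, one untouched face from each opposite pair yields a contradiction. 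Your observation that no face of $V$ can be entirely empty is correct but far too weak to substitute for this machinery; the part of your outline worth keeping is the $d=2$ analysis, which is essentially the intuition behind the paper's parity lemma: separating sets of empty sites are necessarily diagonal.
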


In this theorem, a finite hypercubic lattice is a hypercuboid
$V  \equiv Z_1\times Z_2\times\ldots\times  Z_D$,
where $Z_i = \{1,2,\ldots L_i\}$ for arbitrary integers $L_i \ge 1$, $i=1,2,\ldots,D$,
of which  at least  one is $\ge 2$ so that $V$ can contain at least one dimer.
A dimer is a subset of $V$ made of two adjacent lattice nodes (nearest neighbors).
Two dimers,  $d$ and $d'$ are  connected directly if the shortest  distance between $x\in d$
and $x'\in d'$ is 1. A cluster is a set of the nodes occupied by all dimers
connected directly or indirectly via a sequence of direct connections (Fig.~\ref{fig:def}).
Since the theorem is trivial if any $L_i = 1$, henceforth we will assume
$L_i \ge 2$, $i = 1, \ldots, D$.

\begin{figure}
	\includegraphics[width=0.85\columnwidth]{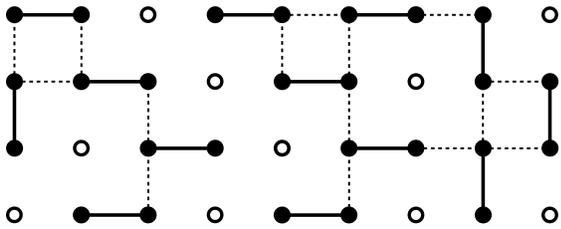}
	\caption{ 	\label{fig:def}
		A jammed configuration of dimers for space dimension $D=2$ and
		system size $L_1 = 9$ and $L_2=4$. Solid lines represent the dimers,
		dashed lines---connections between dimers forming clusters.
		The  dots represent the occupied (solid symbols)
		and unoccupied (open symbols) nodes. }
\end{figure}

%Recalling that the planar case ($D=2$) was already dealt with in  \cite{Kondrat2017},
%below we will focus on the physically most important case of three-dimensional space ($D=2$),
%leaving generalization to arbitrary space dimension $D \ge 2$ to  Section \ref{sec:Generalization}.

%We will use proof by contradiction,

We start by temporarily getting rid of finite  boundaries and consider a
jamming coverage of the infinite lattice $Z^D$ by nonoverlapping dimers. That is,
we assume that any lattice node in $Z^D$ is either unoccupied or belongs
to exactly one dimer, and all nearest neighbors of any unoccupied node
belong to some dimer. The latter condition ensures that no dimer
can be added to the system without overlapping  (the system is jammed).
We start from asking a question: can a jammed configuration in an infinite
system contain a finite cluster?

Let us assume that there exists a jamming configuration of
nonoverlapping dimers on $Z^D$ that comprises a finite cluster.
Let this cluster be denoted by $C$ and let $F$ (``full'') denote the union
of $C$ and all unoccupied lattice nodes completely surrounded by the nodes from
$C$ (``holes'').
The set of the edge nodes of $F$ will be denoted $\Gamma$, and the set
of the nodes adjacent to $\Gamma$ and not in $F$ will be denoted as
$\Gamma'$ (by definition, any node in $\Gamma'$ is unoccupied).
The components of each lattice node are integers, so we
can assign to it the parity, even or odd, of their sum.

\begin{lemma}\label{lemma::parity}
  If a finite cluster exists in an infinite system, all nodes of its outer border $\Gamma$ share the same parity.
\end{lemma}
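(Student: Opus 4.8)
My plan is to exploit a single structural consequence of jamming: in any jammed configuration no two empty (unoccupied) nodes can be adjacent, since otherwise a dimer could still be placed on that empty pair, contradicting maximality. Because every node of $\Gamma'$ is empty, this at once gives that $\Gamma'$ is an \emph{independent set} --- no two of its nodes are nearest neighbors. I would also record two elementary facts for later use: $F$ is connected (it is $C$ together with the holes it encloses), and its complement $Z^D\setminus F$ is connected as well (the holes have been absorbed into $F$), so $\Gamma'$ is exactly the inner boundary layer of a single exterior region.

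Next I would reduce the statement about $\Gamma$ to one about $\Gamma'$. By definition, every neighbor of a node $v\in\Gamma$ lying outside $F$ belongs to $\Gamma'$, and since $v$ is an edge node it has at least one such neighbor $u$. As $u$ and $v$ are adjacent their parities differ, so the parity of $v$ is forced to be the opposite of that of $u\in\Gamma'$. Hence, if I can show that $\Gamma'$ is \emph{monochromatic} (all of one parity $p$), then every node of $\Gamma$ automatically carries the single parity $1-p$, which is precisely the assertion. The real task is therefore to upgrade the local fact ``$\Gamma'$ is independent'' to the global fact ``$\Gamma'$ is monochromatic.''

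For this core step I would reinterpret independence geometrically. A flat unit piece of the boundary --- two neighboring nodes $f,\,f+e_j\in\Gamma$ exposed to the exterior along a common axis direction $e_i$ --- would produce the two exterior nodes $f+e_i$ and $f+e_j+e_i$, which are adjacent and both lie in $\Gamma'$. Independence of $\Gamma'$ therefore forbids flat pieces: the discrete boundary surface separating $F$ from its complement makes only right-angle ``steps.'' Along a step-only surface the node of $F$ attached to consecutive boundary faces changes by parity-preserving diagonal moves (as happens, for instance, on the staircase boundary of a lattice diamond), and I would argue that this pins all of $\Gamma$, and hence $\Gamma'$, to a single parity.

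I expect this last geometric deduction to be the main obstacle. Passing from ``no flat piece'' to ``single parity'' is transparent for a convex planar region, but in general the boundary is a closed $(D-1)$-dimensional discrete surface that may carry concave corners and nontrivial topology, and one must check that none of these features can flip the parity. The cleanest route I would pursue is a minimal-counterexample argument: assume $\Gamma'$ contains two nodes of opposite parity, take such a pair at minimal lattice distance, and use independence together with the connectedness of $Z^D\setminus F$ to produce either a strictly closer pair or, ultimately, a forbidden pair of adjacent empty nodes. An alternative I would keep in reserve is induction on $D$ through axis-aligned two-dimensional sections, although I would first need to verify that the jamming-induced independence survives such slicing.
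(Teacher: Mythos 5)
Your setup matches the paper's: the observation that jamming makes $\Gamma'$ an independent set (two adjacent empty nodes would admit one more dimer) is exactly the structural fact the paper uses, and your reduction --- each node of $\Gamma$ is adjacent to a node of $\Gamma'$, so monochromaticity of $\Gamma'$ forces monochromaticity of $\Gamma$ --- is also the paper's closing step. The problem is that the part you yourself flag as ``the main obstacle'' is the entire content of the lemma, and you never close it. Proving that $\Gamma'$ is globally monochromatic is precisely the nontrivial topological statement; your two candidate routes are both left as unexecuted sketches. The minimal-counterexample plan, in particular, has no visible descent mechanism: given two opposite-parity nodes of $\Gamma'$ at minimal distance, neither independence nor connectedness of $Z^D\setminus F$ by itself tells you how to produce a strictly closer opposite-parity pair, because the two nodes may lie on geometrically distant parts of a boundary surface with concave corners, and the obstruction is exactly the discrete-surface topology you acknowledge not controlling. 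Your ``no flat pieces, only staircase steps'' observation is correct but local; the leap from it to a global parity constraint is the lemma itself.

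The paper closes this gap with an ingredient you do not have: a known two-dimensional result (Ref.~[Koza2019]) stating that the sites surrounding a finite cluster in $2$D form a \emph{single} cluster under nearest- or next-nearest-neighbor connectivity. Restricted to any planar cross-section, independence kills the nearest-neighbor option, so the surrounding sites are connected purely by diagonal (next-nearest) steps, which preserve parity in a plane; arbitrary $D$ is then handled by linking any two nodes of $\Gamma'$ through chains of such parity-preserving jumps on suitably chosen two-dimensional cross-sections. In other words, the paper converts the global monochromaticity claim into a connectivity statement whose proof exists in the literature for $D=2$, and uses slicing to lift it --- which is essentially your ``alternative in reserve,'' except that the needed planar fact is not something to verify in passing but a substantive boundary-topology theorem. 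As it stands, your proposal is a correct reduction plus an honest admission that the core is missing; it is not a proof.
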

\begin{proof}
Take an arbitrary two-dimensional cross-section of the system
that cuts through the cluster and denote the resulting cross-section of $F$
as $\tilde F$.
This set need not be a single, connected cluster (with nearest-neighbor connections).
Even if it consists of several clusters, their number is finite and
their edge nodes belong to $\Gamma$, whereas all their out-adjacent
nearest neighbors are in $\Gamma'$.
In  general, the sites surrounding a finite cluster formed in 2D by nearest-neighbor
connections form another,  single cluster with nearest- or next-neighbor connections
\cite{Koza2019}.
In the case of $\tilde F$, the nearest-neighbor connections in $\Gamma'$
are excluded by the condition that the system is jammed, so only next-nearest
connections are possible. These, however, preserve the node parity
on a planar cross-section.
Any two nodes in $\Gamma'$ may by connected by jumps to a next-neighbor node
on some two-dimensional cross-sections of the system. Each jump preserves the parity.
Since the nodes in $\Gamma$ are nearest neighbors of some nodes in $\Gamma'$,
all nodes in $\Gamma$ must share the same parity.
\end{proof}

The same reasoning can be applied to infinite clusters or clusters
inside finite systems except that in these cases $\Gamma'$ may
consist of several disconnected subsets, each of arbitrary parity.
Thus, in the general case of a jammed configuration of nonoverlapping
dimers on finite or infinite system, nodes in $\Gamma$ that
are nearest neighbors of the same node in $\Gamma'$ are of the same parity.

We are now ready to prove that the finite clusters to which
Lemma~\ref{lemma::parity} explicitly refers actually do not exist:
\begin{lemma}\label{lemma::infinity}
	Any cluster of nonoverlapping dimers in a jamming coverage of $Z^D$ is infinite.
\end{lemma}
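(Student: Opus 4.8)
My plan is to argue by contradiction: suppose the jammed configuration on $Z^D$ contained a finite cluster $C$, form $F$, $\Gamma$, $\Gamma'$ as above, and invoke Lemma~\ref{lemma::parity} so that, after a global relabeling, every node of $\Gamma$ is even. I would first record the local structure this forces. Because $\Gamma\subseteq C$, every node of $\Gamma$ is occupied; because $\Gamma$ is all even, every node of $\Gamma'$ is odd, and two odd nodes are never nearest neighbors, so $\Gamma'$ is automatically an independent set. This tells me that jamming creates no \emph{local} obstruction near the boundary, so the contradiction must be combinatorial. The decisive observation is that each boundary node $v\in\Gamma$ is the even endpoint of a dimer whose other endpoint $u$ is occupied and odd; $u$ cannot lie in $\Gamma$ (wrong parity) nor in $\Gamma'$ (unoccupied), and since $u$ is occupied and adjacent to $v\in C$ it must itself belong to $C\subseteq F$. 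Hence $u$ is an \emph{interior} odd node adjacent to $\Gamma$.

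Collecting these partners over all $v\in\Gamma$ gives, since distinct dimers have distinct endpoints, an injection from $\Gamma$ into the set $W$ of interior odd nodes that are adjacent to $\Gamma$. Consequently $|\Gamma|\le|W|$, and the whole proof reduces to establishing the opposite inequality $|W|<|\Gamma|$ as a purely geometric fact about the shape of $F$.

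To see why $|W|<|\Gamma|$ should hold I would pass to the even sublattice. Since consecutive boundary nodes are even they are diagonal (never axial) neighbors, so $\Gamma$ is a simple closed curve on the $45^{\circ}$-rotated square lattice of even sites, and the odd sites are exactly the faces of that lattice; $W$ is then the set of interior faces incident to this curve. The point is that these incidences are shared: at each convex corner of the boundary the two boundary nodes meeting there have the \emph{same} interior odd neighbor. Because the total turning of a simple closed curve is $2\pi$, there are at least four more convex than concave corners, which I expect to yield a deficit $|W|\le|\Gamma|-4$, contradicting $|\Gamma|\le|W|$. Summing this deficit correctly for non-convex and multiply connected $F$ (the holes, which by jamming are isolated empty sites and so only remove odd sites from $W$, helping the inequality) is the step I would have to carry out carefully.

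The main obstacle, and the part I expect to be hardest, is twofold: turning the convex-corner heuristic into a rigorous proof of $|W|<|\Gamma|$ for an arbitrary boundary, and extending it to $D\ge 3$. In higher dimensions the turning-number argument is unavailable, and the cross-section trick of Lemma~\ref{lemma::parity} cannot simply be reused because a planar slice of a jammed configuration need not itself be jammed; so I would either seek a genuinely $D$-dimensional counting of the $\Gamma$-to-$W$ incidences or reduce to the planar statement by a more careful argument. Once $|W|<|\Gamma|$ is secured, the contradiction with the matching injection is immediate and finishes the lemma.
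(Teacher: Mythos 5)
Your setup is sound and in the same spirit as the paper: reduce to a parity argument plus a counting contradiction between occupied boundary nodes and their dimer partners. In particular, the injection you construct (each node of $\Gamma$ is the even endpoint of a dimer whose odd endpoint must lie in $C\setminus\Gamma$, and distinct dimers give distinct partners) is correct and is essentially the same mechanism the paper exploits. However, your proof is incomplete precisely at its crux: the ``purely geometric fact'' $\|W\|<\|\Gamma\|$ is never established. The turning-number/convex-corner sketch is a heuristic that (i) you have not made rigorous even for $D=2$ (the boundary of $F$ need not be a simple closed curve; it can pinch, have several components, and the corner-counting must survive holes and non-convexity), and (ii) has no analogue for $D\ge 3$, as you yourself concede. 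Since the lemma is claimed for all $D\ge 2$, this is a genuine gap, not a deferred technicality: a global isoperimetric-type inequality for arbitrary all-even-boundary regions is a substantially harder statement than the lemma itself.

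The paper avoids this entirely by localizing the counting to an extreme layer instead of the whole boundary. Let $m=\max_{\bm{n}\in C}(n_1+\cdots+n_D)$, and set $A=\{\bm{n}\in C: n_1+\cdots+n_D=m\}$, $B=\{\bm{n}\in C: n_1+\cdots+n_D=m-1\}$. Every node of $A$ is a border node, so by Lemma~\ref{lemma::parity} no node of $B$ is; consequently $B+\bm{\varepsilon}_i\subset A$ for every direction $i$ (a missing translate would make the corresponding node of $B$ a border node). Since the translates $B+\bm{\varepsilon}_i$ are distinct for different $i$ while each has $\|B\|$ elements, this yields $\|A\|>\|B\|$ by pure translation, with no curvature or turning-number input and uniformly in $D$. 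On the other hand, each node of $A$ is matched by its dimer to a distinct node of $B$ (the partner cannot have coordinate sum $m+1$ by maximality, nor $m$ by parity), forcing $\|A\|\le\|B\|$ --- a contradiction. So the missing idea in your proposal is exactly this restriction to the maximal diagonal hyperplane: there the parity lemma converts the counting inequality you need into a trivial translation argument, whereas over the full boundary $\Gamma$ it remains an open-ended geometric problem.
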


\begin{proof}
Assume that $C$ is finite. Let $\bm{n}$ denote a node, and $n_1,\dots,n_D$ its components.
We also define versors
$\bm\varepsilon_i = \{\delta_{1,i},\delta_{2,i},\ldots,\delta_{D,i},\} $,
$i=1,\ldots,D$, where $\delta_{i,j}$ is the Kronecker delta.
Let $m = \max_{\bm{n}\in C}(n_1+\cdots+n_D)$.
We define two  nonempty sets,
$A = \{\bm{n} \in C: n_1+\cdots+n_D = m\}$ and
$B = \{\bm{n} \in C: n_1+\cdots+n_D = m-1\}$.
By construction, all nodes from $A$ belong to the border of the cluster. However,
since the parity of nodes from $B$ is different from that in $A$,
by Lemma~\ref{lemma::parity} none of the elements of $B$
is a border node.  With these observations we are ready to show that for finite clusters
its sets $A$ and $B$ satisfy
\begin{equation}
\label{eq:A>B}
||A|| > ||B||,
\end{equation}
where $||\cdot||$  denotes the number of elements of a set. To this end, let $B + \bm\varepsilon_i$
denote the set of the nodes that results from  translating each $\bm{n}\in B$ by $\bm\varepsilon_i$.
Clearly, $B + \bm\varepsilon_i \subset A$ for all $i$ because otherwise $B$ would contain
a border node. Since $||B + \bm\varepsilon_i || = ||B||$, we arrive at $||A|| \ge ||B||$.
However, $B + \bm\varepsilon_i \neq B + \bm\varepsilon_j$ for any $i\neq j$. This can be justified by noticing
that the largest value of the $i$-th components of these sets differ by 1.
This leads to (\ref{eq:A>B}).

Each node from $A$ must belong to some dimer, and a dimer can cover
only one node from $A$ due to a different parity
of the nodes forming a dimer.
Thus, the other end of a dimer starting in $A$ must belong to $B$.
Then (\ref{eq:A>B}) implies that the dimers must overlap even
though we assumed they do not.
\end{proof}

What will change if we apply the above reasoning to finite systems?
The only significant difference is that now $B + \bm\varepsilon_i$
need not be a subset of $A$, since some of its nodes can be outside $V$
and hence (\ref{eq:A>B}) need not be valid any more.
Actually, since each node in $A$ is connected via a dimer
to a unique node in $B$ and some nodes in $B$ may belong to
other dimers, each coverage of $V$ must satisfy
the relation opposite to (\ref{eq:A>B}),
\begin{equation}
||A|| \le ||B||.
\end{equation}
This, however, is possible only if $B + \bm\epsilon_i \not\subset V$ for some $i$.
In this way we conclude that $B$ touches at least one side of $V$.
We will show that the same is valid for $A$, too.
\begin{lemma} \label{lemma::A-touches-V}
	Set $A$ touches at least one side of $V$.
\end{lemma}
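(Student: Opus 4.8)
The plan is to argue by contradiction: suppose $A$ touches no side of $V$, so every $\bm a\in A$ is interior (all its coordinates strictly between $1$ and $L_i$). First I would record the local structure forced above $A$. For any $\bm a\in A$ and any $i$, the node $\bm a+\bm\varepsilon_i$ lies in $V$ (by interiority) and has coordinate sum $m+1$, hence does not belong to $C$. It cannot belong to any other cluster either, since it is a nearest neighbor of $\bm a\in C$ and that would merge the two clusters. Therefore every such $\bm a+\bm\varepsilon_i$ is \emph{unoccupied}: under the interiority hypothesis the whole ``upper shell'' of $A$ is empty.

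Next I would exploit the counting already in hand. The dimer pairing gives the injection $\bm a\mapsto$ (its partner in $B$), so $||A||\le ||B||$. I would first dispose of the equality case: if $||A||=||B||$ the pairing is a bijection, so \emph{every} node of $B$ is joined by its dimer to a node of $A$, i.e.\ every dimer meeting $B$ points upward. The paragraph preceding this lemma produced a node $\bm b^{*}\in B$ on an upper face, $b^{*}_{i^{*}}=L_{i^{*}}$; its upward partner $\bm b^{*}+\bm\varepsilon_{k}\in A$ must have $k\neq i^{*}$ (otherwise it leaves $V$), so this partner still has $i^{*}$-coordinate $L_{i^{*}}$ and $A$ touches a side, a contradiction. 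Hence it remains to treat the strict case $||A||<||B||$, where some nodes of $B$ carry downward dimers. Here I would try to recover the infinite-system estimate $||A||>||B||$, which would contradict $||A||\le||B||$: the interiority hypothesis is exactly what removes the boundary obstruction that blocked $B+\bm\varepsilon_i\subset A$ in finite systems, so the translation argument of Lemma~\ref{lemma::infinity} should again force a strict surplus of $A$ over $B$.

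The hard part is reconciling this with the already-established fact that $B$ does touch an upper face. That fact means the inclusion $B+\bm\varepsilon_{i^{*}}\subset A$ genuinely fails in direction $i^{*}$, so one cannot re-run the infinite argument verbatim; the boundary-touching of $B$ must be shown to be incompatible with the interiority of $A$. Concretely, the remaining obstructions to $B+\bm\varepsilon_i\subset A$ come from ``exposed'' nodes of $B$ (those possessing an unoccupied up-neighbor) and from nodes of $B$ sitting on an upper face. I expect the decisive step to be a jamming argument: an unoccupied node has all its neighbors occupied, and combining this with the no-merge principle (a node adjacent to $C$ is either in $C$ or unoccupied) one should propagate the forced emptiness above $A$ and show that, with no free boundary available to absorb it in the interior, such exposed or face nodes of $B$ cannot occur in the directions needed. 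Pushing this propagation to a clean contradiction, rather than letting it escape into neighboring clusters indefinitely, is where I anticipate the main difficulty.
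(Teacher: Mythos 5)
Your counting step and your treatment of the equality case are sound. The injection $\bm a\mapsto$ (dimer partner of $\bm a$) does give $||A||\le||B||$, and when $||A||=||B||$ this map is a bijection, so the boundary node $\bm b^{*}\in B$ with $b^{*}_{i^{*}}=L_{i^{*}}$ (supplied by the paragraph preceding the lemma) has its dimer partner $\bm b^{*}+\bm\epsilon_{k}\in A$ with $k\neq i^{*}$, and that partner inherits the coordinate $L_{i^{*}}$. This is exactly the geometric step the paper itself uses: lift the boundary node of $B$ one step onto the hyperplane of $A$ and observe that the extremal coordinate survives.

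The genuine gap is the case $||A||<||B||$, and it cannot be waved away: strict inequality does occur in jammed configurations, so under your contradiction hypothesis you must still rule it out, and you do not. Your proposed route --- recover the infinite-system bound (\ref{eq:A>B}) from interiority of $A$ --- is blocked for the reason you yourself identify: the inclusion $B+\bm\epsilon_{i}\subset A$ fails precisely in the direction $i^{*}$ in which $B$ provably reaches the boundary, so the translation argument of Lemma~\ref{lemma::infinity} cannot be rerun, and your fallback (``propagate forced emptiness by jamming'') is only announced, never executed. What the paper uses here is the parity machinery of Lemma~\ref{lemma::parity} and its finite-system extension, the same ingredient that produced the boundary node of $B$ in the first place: a node of $B$ has the parity opposite to that of $A$, hence cannot be a border node of the cluster, hence cannot have an \emph{unoccupied} neighbor inside $V$ on the hyperplane of $A$; consequently every neighbor $\bm n+\bm\epsilon_{j}$ of the boundary node $\bm n\in B$ is either outside $V$ or lies in $A$. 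They cannot all be outside $V$, since then $\bm n$ would sit at the corner of $V$ with all $n_{j}=L_{j}$, forcing $A=\emptyset$. So some $\bm n+\bm\epsilon_{j}\in A$ with $j\neq i^{*}$ exists, and it touches the side $n_{i^{*}}=L_{i^{*}}$. With this parity fact in hand no case split is needed at all (indeed, whenever some direction satisfies $B+\bm\epsilon_{i}\subset V$, the same fact already forces $||A||\ge||B||$, i.e.\ your equality case); without it, or without reproving it, your strict case cannot be closed by counting alone, and that is where your proof is missing its key idea.
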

\begin{proof}
Let $\bm{n} \in B$ be a system border node. One of its nearest-neighbors
must be in $A$, otherwise $\bm n$ would lie at a corner of $V$,
which would imply that $A$ is empty though by definition it is not.
Thus, for some $i\neq j$ we have  $\bm{n} + \bm{\epsilon}_j \in A \subset V$
and $\bm{n} + \bm{\epsilon}_i \not\in V$.
Then $\bm{n} + \bm{\epsilon}_j + \bm{\epsilon}_i \not\in V$, that is,
the node $\bm{n} + \bm{\epsilon}_j \in A$ is at the border of $V$.
\end{proof}

The above observations can be generalized as follows.
The condition  $ n_1+\cdots+n_D = m$ used in the proof
of Lemma~\ref{lemma::infinity} defines a ($D-1$)-dimensional hyperplane,
and one can define a total of $2^D$ different hyperplanes by maximizing
the scalar product  ${\bm\sigma}\cdot\bm{n} \equiv \sigma_1 n_1 + \ldots \sigma_D n_D$
over  $\bm{n} \in C$, where $\sigma_i = \pm1$.  Each of them can be equivalently
used in the proof of Lemma~\ref{lemma::infinity} after generalizing the number $m$ used to define sets $A$ and $B$
to
\begin{equation} \label{eq::def-sigma}
  m(C; \bm{\sigma}) = \max_{\bm{n}\in C} \left( \bm{\sigma}\cdot \bm{n} \right).
\end{equation}
They also define $2^D$ half-spaces $\bm{\sigma} \cdot \bm{n} \le m(C; \bm{\sigma})$,
each containing $C$. Their intersection  forms a convex $D$-dimensional
polyhedron $P$.
Its side corresponding to $\bm{\sigma}$ in (\ref{eq::def-sigma})
will be denoted as $\pi_{\bm{\sigma}}$, and the hyperplane it lies on,
$\bm{\sigma} \cdot \bm{n} = m(C; \bm{\sigma})$, by  $\Pi_{\bm{\sigma}}$.
By construction, each  $\pi_{\bm{\sigma}}$ shares at least one node with cluster $C$.
Some of $\pi_{\bm{\sigma}}$ can be degenerated and contain only a single
lattice node---in this case this node must belong to $C$.

The notions of $P$ and its sides $\pi_{\bm{\sigma}}$ are visualized in Fig.~\ref{fig:2D-container}.
Here the system ($V$) is two-dimensional (2D), finite, jammed, and contains three clusters.
The central one (green full symbols) is
tightly encompassed by $2^D=4$  straight lines  $\pm n_1 \pm n_2 = \mbox{const}$.
These lines (or hyperplanes in the general case) are denoted in the paper as $\Pi_{\bm{\sigma}}$.
They form a rectangle (polyhedron) $P$ (green dashed line),
and its sides are denoted as $\pi_{\bm{\sigma}}$ in the paper.
In Fig.~\ref{fig:2D-container} we also distinguished the outer border
of the cluster, $\Gamma'$, which is here made of two disconnected parts
(the connectivity for $\Gamma'$ in 2D is defined via the next-nearest neighbor relation).
The parity of the nodes within each part is the same, but the
parities of disconnected parts are unrelated to each other.
We also enlarged the symbols for the nodes from $C$ that
lie on the edges of both $P$ and $V$. Below we show that each side of $P$
contains at least $D-1$ such ``edge'' points from $V$.
Finally, each dimer in Fig.~\ref{fig:2D-container} belongs to a cluster spanning
opposite sides of the system, and our Theorem~\ref{theorem::1} generalizes this observation
to arbitrary hypercubic systems at jamming.

\begin{figure}
	\includegraphics[width=0.9\columnwidth]{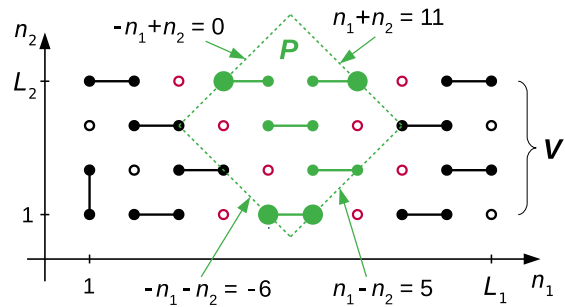}
	\caption{\label{fig:2D-container}
		A jammed coverage of a finite system $V$ ($D=2$, $L_1 = 10$, $L_2 = 4$).
		One cluster is marked with green, and the nodes from the corresponding
		set $\Gamma'$ are in magenta.
		The dotted lines satisfy
		$\pm n_1 \pm n_2  = \mathrm{const.}$ and form a rectangle $P$
		tightly encompassing the cluster. The symbols for the nodes belonging
		to the edges of both $P$ and $V$ are enlarged.
	}
\end{figure}

We thus have $2^D$ sides  $\pi_{\bm\sigma}$ of $P$,
each touching cluster $C$,
and for each of them we can define nonempty sets
$A(\bm{\sigma}) = \{\bm{n} \in C: \bm{\sigma} \cdot \bm{n} = m(C; \bm{\sigma})\}$
and $B(\bm{\sigma}) = \{\bm{n} \in C: \bm{\sigma} \cdot \bm{n} = m(C; \bm{\sigma}) -1 \}$
of cluster's border nodes lying on $\pi_{\bm{\sigma}}$
and their nearest neighbors from $C$, respectively.
These sides are geometrically equivalent---by rotating $P$ we can
transform any of them into any other.  Thus, Lemma~\ref{lemma::A-touches-V}
is applicable to each of them. This leads to

\begin{corollary}
Each side of polyhedron $P$ must by touched by at least one side of hypercuboid $V$.
\end{corollary}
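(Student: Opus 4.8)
The plan is to reduce the general case of an arbitrary sign vector $\bm{\sigma}\in\{\pm1\}^D$ to the distinguished case $\bm{\sigma}=(1,\ldots,1)$ that was already settled in Lemma~\ref{lemma::A-touches-V}, exploiting the reflection symmetries of the hypercuboid $V$. First I would introduce, for each axis $i$, the coordinate reflection $R_i$ that replaces $n_i$ by $L_i+1-n_i$ and leaves the other components fixed. Each $R_i$ maps $V$ bijectively onto itself, preserves nearest-neighbor adjacency, and therefore carries any jammed dimer configuration to another jammed dimer configuration on the same $V$; in particular it sends the cluster $C$ to a cluster $C'$ of a jammed coverage.

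Next I would record how these reflections interact with the construction of $A(\bm{\sigma})$ and $B(\bm{\sigma})$. Applying $R_i$ sends $\bm{\sigma}\cdot\bm{n}$ to the scalar product with $\sigma_i$ flipped in sign, up to an additive constant $L_i+1$ that does not affect which nodes are extremal. Hence the composition $R=\prod_{i:\,\sigma_i=-1}R_i$ converts the $\bm{\sigma}$-extremal sets $A(\bm{\sigma})$ and $B(\bm{\sigma})$ into the all-ones sets $A$ and $B$ of the transformed configuration $C'$. Because $C'$ is again a finite cluster in a jammed coverage of $V$, the finite-system counting argument preceding Lemma~\ref{lemma::A-touches-V} gives $||A|| \le ||B||$, forcing $B$ to meet a side of $V$, and Lemma~\ref{lemma::A-touches-V} applied to $C'$ then forces $A$ to meet a side of $V$. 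Pulling back by the symmetry $R$, the side $\pi_{\bm{\sigma}}$ of the original $P$ (which contains $A(\bm{\sigma})$) touches a side of $V$. Running this over all $2^D$ choices of $\bm{\sigma}$ yields the corollary.

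The main obstacle I expect is bookkeeping rather than deep content: I must verify that every ingredient used in Lemmas~\ref{lemma::parity}--\ref{lemma::A-touches-V} is genuinely invariant under the reflections $R_i$. The adjacency between $A$ and $B$ and the dimer-matching inequality transport cleanly, since $R_i$ preserves edges and maps dimers to dimers. The parity statement is the only subtle point: a reflection $R_i$ shifts the parity of $n_1+\cdots+n_D$ by $L_i+1$ \emph{uniformly} over all nodes, so it may flip every parity at once but never disturbs the ``same parity'' versus ``different parity'' relations on which Lemma~\ref{lemma::parity} and the proof of Lemma~\ref{lemma::infinity} actually rely. Once this invariance is confirmed, the reduction is immediate and requires no new geometric estimate; the phrase ``by rotating $P$'' in the text is then made precise as ``by the axis reflections of $V$,'' which are the genuine symmetries relating the $2^D$ sides of $P$ for hypercuboids with unequal edge lengths.
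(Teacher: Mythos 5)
Your proposal is correct and takes essentially the same route as the paper: reduce each side $\pi_{\bm{\sigma}}$ to the distinguished case $\bm{\sigma}=(+1,\ldots,+1)$ by a lattice symmetry of $V$, then invoke the counting inequality $||A||\le||B||$ and Lemma~\ref{lemma::A-touches-V}. If anything, your write-up is tighter than the paper's, which justifies the reduction with the loose phrase ``by rotating $P$'' (rotations are not symmetries of a hypercuboid with unequal edge lengths), whereas your axis reflections $n_i\mapsto L_i+1-n_i$ are the genuine symmetries required---precisely the ``mirror reflections'' the paper itself invokes later in the proof of Lemma~\ref{lemma::shared-vertices}.
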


Each side of $V$ is perpendicular to exactly one of the versors $\bm{\epsilon}_i$
and parallel to the others.
Actually, there are always two sides perpendicular to any $\bm{\varepsilon}_i$
and they can be naturally labeled  as $+\bm{\varepsilon}_i$ and $-\bm{\varepsilon}_i$:
the equation of the plane containing the side labeled with $+\bm{\varepsilon}_i$
is $n_i = L_i$, and for  $-\bm{\varepsilon}_i$ it reads $n_i = 1$.

%\begin{lemma}
%If a side of $V$ labeled with $+\bm{\varepsilon}_i$
%cuts $P$, then it cuts all $\pi_{\bm{\sigma}}$ with $\sigma_i = +1$.
%\end{lemma}
%\begin{proof}
%%
%%
%%
%Powyzszy lemat nie moze byc prawdziwy, moze nie jest w ogole potrzebny?
%The planes $\Pi_{\bm{\sigma}}$ with $\sigma_i = +1$ are
%
%***** Przerobic
%otherwise some of $\pi_{\bm{\sigma}}$ with $\sigma_i = +1$
%would be located completely outside of $V$,
%which contradicts their definition.
%\end{proof}

\begin{lemma} \label{lemma::shared-vertices}
  Any side $\pi_{\bm{\sigma}}$  of polyhedron $P$ touches
  at least $D-1$ sides of the system $V$.
\end{lemma}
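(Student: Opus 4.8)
The plan is to reduce to the canonical direction $\bm\sigma=(1,\dots,1)$ and then argue by contradiction, reusing the two competing estimates on $\|A(\bm\sigma)\|$ and $\|B(\bm\sigma)\|$ that already drive Lemmas~\ref{lemma::infinity} and~\ref{lemma::A-touches-V}. Since, as the text observes, the symmetries of $P$ carry any side $\pi_{\bm\sigma}$ into any other, it suffices to treat $\bm\sigma=(1,\dots,1)$, for which $A\equiv A(\bm\sigma)$ and $B\equiv B(\bm\sigma)$ sit on the layers $\sum_i n_i=m$ and $\sum_i n_i=m-1$. I would call a coordinate direction $i$ \emph{free} if $A$ touches neither of the faces $n_i=1$ and $n_i=L_i$. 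A simple pigeonhole count shows that if $\pi_{\bm\sigma}$ met at most $D-2$ of the $2D$ sides of $V$, then at least two directions, say $1$ and $2$, would be free; the goal is to show that this is impossible.

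The engine is the pair of opposite inequalities. On one hand, pairing each node of $A$ with the far end of its dimer gives an injection $A\hookrightarrow B$, so $\|A\|\le\|B\|$ holds unconditionally. On the other hand, in two free directions the translates $B+\bm\varepsilon_1$ and $B+\bm\varepsilon_2$ remain inside $V$, and the corner argument of Lemma~\ref{lemma::infinity}---take the node of $A$ with the largest second coordinate and note that it cannot lie in $B+\bm\varepsilon_1$---would then force the strict bound $\|A\|>\|B\|$, a contradiction. Hence at most one direction can be free, i.e. $A$ meets at least $D-1$ sides of $V$. Equivalently, I would first show that $B$ reaches at least $D-1$ of the upper faces $n_i=L_i$ and then propagate each such incidence to $A$ exactly as in Lemma~\ref{lemma::A-touches-V}: a node of $B$ on the face $n_i=L_i$ possesses a neighbour in $A$ reached by a transverse step $\bm\varepsilon_j$ ($j\neq i$), and that neighbour still satisfies $n_i=L_i$, so it is an edge node shared by $\pi_{\bm\sigma}$ and that face. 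A short distinctness check---the incidences sit on different faces of $V$---then yields the stated count $D-1$.

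The hard part, and the step I expect to cost the most work, is justifying that the upward translates really land in the \emph{same} cluster, that is $B+\bm\varepsilon_i\subseteq A$ rather than merely $B+\bm\varepsilon_i\subseteq V$. In the infinite setting this was automatic, since nodes of $B$ carry no cluster border; in a finite jammed system the jamming condition permits a vacancy, or a node of a different cluster, to sit directly above a node of $B$ along a free direction---precisely the case the naive corner count does not see. I would attack this with the jamming and parity input already in hand: such a vacancy lies in $\Gamma'$ at the maximal height $m$, its neighbours are forced to be occupied, and Lemma~\ref{lemma::parity} constrains their parities, which together with the connectivity of $C$ should exclude a vacancy (or a foreign node) above $B$ in a genuinely free direction; the delicate subcase is a node of $B$ lying one layer below a face of $V$, where the upward neighbour is unconstrained from outside. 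Should direct control prove awkward, an alternative is induction on $D$ with the corollary above as the base case: restrict the configuration to a face $n_i=L_i$ that $\pi_{\bm\sigma}$ already touches, note that jamming descends to this $(D-1)$-dimensional box (any dimer addable in the slice would be addable in $V$), and apply the lower-dimensional statement to gain the remaining $D-2$ incidences. There the obstacle migrates to checking that the sliced cluster and its bounding polyhedron behave well---in particular that dimers perpendicular to the slice act only as harmless pinned sites and do not disconnect the relevant facet.
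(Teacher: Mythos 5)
You have set up a genuinely different strategy from the paper's (a global counting argument rather than a constructive one), but it contains a real gap, and it sits exactly where you say the ``hard part'' is: the inclusion $B+\bm{\epsilon}_i\subseteq A$ for a free direction $i$ is never established, and neither of the tools you propose can establish it. First, freeness constrains $A$, not $B$, so you do not even get $B+\bm{\epsilon}_i\subseteq V$: a node of $B$ may sit on the face $n_i=L_i$ with its dimer pointing \emph{down} to level $m-2$, and nothing in your setup forbids this. Second, the parity input is vacuous here: a vacancy at the top level $m$ has all of its neighbours at levels $m\pm1$, and those two levels have the \emph{same} parity, so the finite-system version of Lemma~\ref{lemma::parity} (``$\Gamma$-nodes adjacent to the same $\Gamma'$-node share parity'') excludes nothing; moreover $\Gamma$ and $\Gamma'$ are defined relative to $C$ alone, so Lemma~\ref{lemma::parity} is silent about a foreign dimer occupying the site above a node of $B$. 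Third, the induction-on-$D$ fallback fails at the first step: the restriction of a jammed dimer configuration to a face of $V$ is not a jammed dimer configuration in $D-1$ dimensions (dimers perpendicular to the face become monomers), so the lower-dimensional lemma cannot be invoked as stated. It is not even clear that your key claim is true; deciding it seems to require precisely the forced-propagation analysis that constitutes the paper's proof.

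The paper avoids needing any such global statement. Its proof is constructive: starting from $\bm{x}^1_{\mathrm{A}}\in A$ and its dimer partner $\bm{x}^1_{\mathrm{B}}$, it builds a staircase $\bm{x}^{k+1}_{\mathrm{A}}=\bm{x}^k_{\mathrm{B}}+\bm{\epsilon}_j$ in a transverse direction $j$, and it invokes the dichotomy ``in $A$ or outside $V$'' only for these staircase nodes. There the dichotomy genuinely follows from jamming plus maximality of $m$: if $\bm{x}^{k+1}_{\mathrm{A}}$ were vacant and inside $V$, the node $\bm{x}^k_{\mathrm{A}}+\bm{\epsilon}_j$ (which is then also inside $V$, and adjacent both to the vacancy and to $\bm{x}^k_{\mathrm{A}}\in C$) would have to be occupied, hence lie in $C$ at level $m+1$, which is impossible. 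The crucial structural input is that $\bm{x}^k_{\mathrm{B}}$ is the bottom of a dimer whose top lies in $A$ --- exactly the information your count over all of $B$ discards, and exactly what is missing for downward-matched nodes of $B$. Finiteness then forces each staircase to exit through the face $n_j=L_j$, planting a node of $A$ on that face, once for each of the $D-1$ transverse directions, which is the assertion of Lemma~\ref{lemma::shared-vertices}. To close your gap you would in effect have to rebuild this staircase, so as it stands the proposal is an interesting alternative framework but not a proof.
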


\begin{proof}
Since we are  free to permute the versors $\bm{\epsilon_i}$
and to exchange each $\bm{\epsilon_i}$ with $-\bm{\epsilon_i}$
(mirror reflection), without loss of generality we will prove
Lemma~\ref{lemma::shared-vertices} for  the particular case
$\bm{\sigma} = (+1,\ldots,+1)$. Let $\bm{x}^1_\mathrm{A}$ be a lattice node
that belongs both to the cluster ($C$) and to $\pi_{\bm{\sigma}}$.
This node must belong to the edge of the cluster ($\bm{x}^1_\mathrm{A} \in A$)
and to a dimer. The other node belonging to this dimer must be of the form
$\bm{x}^1_\mathrm{B} \equiv \bm{x}^1_\mathrm{A} - \bm{\epsilon}_i$,
with some $1 \le i \le D$, and belongs to $B$.
For the reasons stated above, we can assume $i=1$  (the dimer
is parallel to $\bm{\epsilon}_1$). There are thus $D-1$
directions orthogonal to the dimer, given by
$\bm{\epsilon}_2,\ldots,\bm{\epsilon}_D$.
Let us distinguish any of them, say, $\bm{\epsilon}_2$.
Consider node
$\bm{x}^2_\mathrm{A} \equiv \bm{x}^1_\mathrm{B} + \bm{\epsilon}_2 =
\bm{x}^1_\mathrm{A}-\bm{\epsilon}_1 + \bm{\epsilon}_2$.
This node belongs to $\Pi_{\bm{\sigma}}$ and is adjacent to a node in $B$,
so either it belongs to $A$ or is outside the system $V$.
In the former case it belongs to a dimer whose
other node is at $\bm{x}_B^2 \equiv \bm{x}_A^2 - \bm{\epsilon}_j$,
with some $j \neq 2$.
We construct
$\bm{x}^3_\mathrm{A} \equiv \bm{x}^2_\mathrm{B} + \bm{\epsilon}_2$
and notice that, just as for  $\bm{x}^2_\mathrm{A}$,
it belongs to $\Pi_{\bm{\sigma}}$, so either it belongs to $A$
or is outside the system $V$.
We can continue this to obtain a sequence of different points
$\{\bm{x}^k_\mathrm{A}\}_{k=1}^n$, $n \ge2$, which are of the form
$\bm{x}^1_\mathrm{A} - \sum_{j=1}^{k-1} \bm{\epsilon}_{i_j} + (k-1)\bm{\epsilon_2}$
with  ${i_j} \in \{1,2,\ldots,D\} \setminus \{2\}$.
Except for the last one, each consecutive element
in this sequence belongs to $A$.
Since the cluster is finite, so is the sequence.
As the last of its elements, denoted as
$\bm{x}^n_\mathrm{A}$, is outside the system and, by construction,
$\bm{x}^n_\mathrm{A} - \bm{\epsilon}_2 = \bm{x}^{n-1}_\mathrm{B}$
belongs to the cluster, these two lattice nodes are on the opposite
sides of the system's side labeled by $\bm{\epsilon}_2$
and so this side touches the cluster.
The same reasoning can be repeated for all other $\bm{\epsilon}_j \neq \bm{\epsilon}_1$
to show that the plane $\pi_{\bm{\sigma}}$ shares a node with at least $D-1$ sides of the system.
The proof is concluded by repeating
the above reasoning for all remaining $\pi_{\bm{\sigma}}$.
\end{proof}

Let us now return for a moment to the case of an infinite
system and formulate an interesting observation:

\begin{lemma} \label{lemma::D-1}
	Any dimer in a jammed configuration of nonoverlapping dimers on an infinite
	$D$-dimedsional hypercubic lattice belongs to a cluster that
	extends to  infinity in at least $D-1$ directions.
\end{lemma}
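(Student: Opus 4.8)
The plan is to carry the chain argument from the proof of Lemma~\ref{lemma::shared-vertices} over to the boundaryless lattice $Z^D$, where it is the \emph{absence} of the sides of $V$ that forces a cluster to run off to infinity. By Lemma~\ref{lemma::infinity} the cluster $C$ containing the prescribed dimer is infinite, so it suffices to show that this infiniteness is distributed over at least $D-1$ of the coordinate axes.

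The core step is a direct reuse of the polyhedral apparatus. Suppose first that some direction $\bm\sigma\in\{\pm1\}^D$ caps the cluster, that is, $m(C;\bm\sigma)$ is finite, so that the maximal hyperplane $\Pi_{\bm\sigma}$ and the border sets $A(\bm\sigma),B(\bm\sigma)$ are well defined. Seeding the construction at a node $\bm x^1_\mathrm{A}\in A(\bm\sigma)$ whose dimer is, say, parallel to $\bm\epsilon_1$, I would generate the very same chain $\bm x^2_\mathrm{A},\bm x^3_\mathrm{A},\dots$ as before, each step being the move $-\bm\epsilon_{i_j}+\bm\epsilon_2$ that keeps the point on $\Pi_{\bm\sigma}$ while strictly increasing its $\bm\epsilon_2$ component. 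The only place where the finite proof appealed to the boundary of $V$ was to terminate this chain; on $Z^D$ there is no boundary, and each successive $\bm x^{k+1}_\mathrm{A}$ lies on $\Pi_{\bm\sigma}$, is adjacent to the cluster node $\bm x^k_\mathrm{B}$, and therefore itself belongs to $C$. Hence the chain never stops, its $\bm\epsilon_2$ component grows without bound, and $C$ is unbounded along $\bm\epsilon_2$; repeating along $\bm\epsilon_3,\dots,\bm\epsilon_D$ exhibits unboundedness of $C$ in the $D-1$ axes transverse to the chosen dimer, which is exactly the claim.

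The main obstacle is the degenerate alternative in which \emph{no} capping direction exists---the cluster escapes to infinity along every diagonal $\bm\sigma$, so there is no extremal hyperplane to seed the chain (a tube unbounded both ways along a single axis is of this type). I expect to settle this case by a jamming argument rather than a geometric one, by proving that a jammed cluster can never be confined to a bounded cross-section in two distinct axes. The engine is a local clash: if $C$ is bounded in axes $1$ and $2$, put $m=\max_{\bm n\in C}(n_1+n_2)$, and suppose two nodes $\bm a,\bm a+\bm\epsilon_l$ of the face $\{n_1+n_2=m\}\cap C$ (with $l\ge3$) are nearest neighbours; then $\bm a+\bm\epsilon_1$ is unoccupied, so jamming forces its neighbour $\bm a+\bm\epsilon_1+\bm\epsilon_l$ to be occupied, yet the latter is adjacent to $\bm a+\bm\epsilon_l\in C$ and hence lies in $C$ with $n_1+n_2=m+1$, contradicting maximality. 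This shows every extremal face of a confined cluster is an independent set, and the delicate part---which I anticipate to be the crux of the whole lemma---is to upgrade this local independence, together with the requirement that the layer of nodes bordering $C$ be unoccupied, into a full contradiction, thereby excluding confinement in two axes and leaving the cluster unbounded along at least $D-1$ of them.
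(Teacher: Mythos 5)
Your first case is precisely the paper's own proof: the paper dispatches this lemma in a single sentence, asserting that the argument of Lemma~\ref{lemma::shared-vertices} carries over verbatim except that on $Z^D$ the sequence $(\bm{x}^k_\mathrm{A})$ cannot terminate by leaving the system, so it is infinite and the chains give unboundedness along every axis transverse to the seed dimer. You are also right that this transcription silently presupposes a seed, i.e.\ some $\bm{\sigma}$ with $m(C;\bm{\sigma})<\infty$ so that $A(\bm{\sigma})\neq\emptyset$; the paper never discusses the alternative (a cluster confined to a tube, unbounded both ways along a single axis, is uncapped in every diagonal direction), so on this point you are more careful than the source.

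The genuine gap is that you leave exactly that uncapped case unresolved, and for $D\ge 3$ it is the only case that still needs an argument. The missing step is not a new ``crux'', though: it is your own chain run on the partial functional $f(\bm{n})=n_1+n_2$, which attains a finite maximum $m$ on $C$ as soon as $C$ is bounded along axes $1$ and $2$. Your local clash shows that the face $A=\{\bm{n}\in C: f(\bm{n})=m\}$ is an independent set, so the dimer of any face node must descend to level $m-1$, i.e.\ point along $-\bm{\epsilon}_1$ or $-\bm{\epsilon}_2$ (a dimer along any $\bm{\epsilon}_l$ with $l\ge3$ would join two face nodes, which independence forbids). Relabel so that the seed's dimer points along $-\bm{\epsilon}_1$, and set $\bm{x}^{k+1}_\mathrm{A}=\bm{x}^k_\mathrm{B}+\bm{\epsilon}_2$ as usual. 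If $\bm{x}^{k+1}_\mathrm{A}$ were unoccupied, jamming would force $\bm{x}^{k+1}_\mathrm{A}+\bm{\epsilon}_1=\bm{x}^k_\mathrm{A}+\bm{\epsilon}_2$ to be occupied, hence in $C$ at level $m+1$---your clash again; so $\bm{x}^{k+1}_\mathrm{A}$ is occupied, lies in $A$, and nonoverlap with the previous dimer rules out its dimer pointing along $-\bm{\epsilon}_2$. Thus $\bm{x}^{k+1}_\mathrm{A}=\bm{x}^k_\mathrm{A}-\bm{\epsilon}_1+\bm{\epsilon}_2$ for all $k$, so $n_2\to\infty$ along the chain, contradicting boundedness along axis $2$. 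This shows a jammed cluster cannot be bounded in two distinct axes, which by itself proves the lemma (and makes your first case redundant); with this paragraph added, your proposal is complete.
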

\begin{proof}
The proof is the same as for Lemma~\ref{lemma::shared-vertices} except
that now the system is infinite and so the sequence $(\bm{x}^k_\mathrm{A})$
must be infinite.
\end{proof}

We are now ready to prove Theorem~\ref{theorem::1}.
Suppose that it is not valid and
that for some jammed configuration of dimers there exists a cluster
that does not touch two opposite sides of a hypercuboid $V$
on a $D$-dimensional hypercubic lattice.
Of the two sides of $V$ orthogonal to each $\bm{\epsilon}_i$
we could select at least one that is not touched by the cluster.
Let us denote it by $\sigma'_i$.
We could then construct the polyhedron $P$ as described earlier and
consider its side $\pi_{\bm{\sigma'}}$
with $\bm{\sigma'} = (\sigma'_1,\ldots,\sigma'_D)$.
By Lemma~\ref{lemma::shared-vertices}, it touches at least $D-1\ge1$
sides of $V$, which contradicts the assumption that it touches none.

%%%%%%%%%%%%%%%%%%%%%%%%%%%%%%%%%%%%%%%%%%%%%%%%%%%%%%%%%%%%%%%%%%%%%%%%%%%%%
%%%%%%%%%%%%%%%%%%%%%%%%%%%%%%%%%%%%%%%%%%%%%%%%%%%%%%%%%%%%%%%%%%%%%%%%%%%%%
%%%%%%%%%%%%%%%%%%%%%%%     SECTION: WRAPPING     %%%%%%%%%%%%%%%%%%%%%%%%%%%
%%%%%%%%%%%%%%%%%%%%%%%%%%%%%%%%%%%%%%%%%%%%%%%%%%%%%%%%%%%%%%%%%%%%%%%%%%%%%
%%%%%%%%%%%%%%%%%%%%%%%%%%%%%%%%%%%%%%%%%%%%%%%%%%%%%%%%%%%%%%%%%%%%%%%%%%%%%

%\section{Periodic boundary conditions\label{sec:Wrapped}}

%TBA

%%%%%%%%%%%%%%%%%%%%%%%%%%%%%%%%%%%%%%%%%%%%%%%%%%%%%%%%%%%%%%%%%%%%%%%%%%%%%
%%%%%%%%%%%%%%%%%%%%%%%%%%%%%%%%%%%%%%%%%%%%%%%%%%%%%%%%%%%%%%%%%%%%%%%%%%%%%
%%%%%%%%%%%%%%%%%%%%%%%   SECTION: CONCLUSIONS    %%%%%%%%%%%%%%%%%%%%%%%%%%%
%%%%%%%%%%%%%%%%%%%%%%%%%%%%%%%%%%%%%%%%%%%%%%%%%%%%%%%%%%%%%%%%%%%%%%%%%%%%%
%%%%%%%%%%%%%%%%%%%%%%%%%%%%%%%%%%%%%%%%%%%%%%%%%%%%%%%%%%%%%%%%%%%%%%%%%%%%%

\section{Conclusions \label{sec:Conclusions}}

The problem of percolation in the jammed state of the RSA process of $k$-mers
(needles of size $1 \times \dots \times 1 \times k$) is  far easier to formulate
than to treat rigorously.  The conjecture is that each $k$-mer in a jammed
configuration on a hypercubic lattice of space dimension $D \ge 2$ belongs
to a percolating cluster. This conjecture has been already proved
for $D = 2$ and arbitrary  $k$ \cite{Kondrat17}.
Here we have rigorously proven it for $k = 2$ and arbitrary $D$.
We hope  that the methods we have developed will prove useful in proving
the most general case of arbitrary $k$ and $D$.

%\appendix

%%%%%%%%%%%%%%%%%%%%%%%%%%%%%%%%%%%%%%%%%%%%%%%%%%%%%%%%%%%%%%%%%%%%%%%%%%%%%%
%%%%%%%%%%%%%%%%%%%%%%%%%%%%%%%%%%%%%%%%%%%%%%%%%%%%%%%%%%%%%%%%%%%%%%%%%%%%%%
%%%%%%%%%%%%%%%%%%%%%%%%%%%    REFERENCES    %%%%%%%%%%%%%%%%%%%%%%%%%%%%%%%%%
%%%%%%%%%%%%%%%%%%%%%%%%%%%%%%%%%%%%%%%%%%%%%%%%%%%%%%%%%%%%%%%%%%%%%%%%%%%%%%
%%%%%%%%%%%%%%%%%%%%%%%%%%%%%%%%%%%%%%%%%%%%%%%%%%%%%%%%%%%%%%%%%%%%%%%%%%%%%%

\section*{References}

%\bibliography{percol2}
%\input{main.bbl}

%apsrev4-2.bst 2019-01-14 (MD) hand-edited version of apsrev4-1.bst
%Control: key (0)
%Control: author (8) initials jnrlst
%Control: editor formatted (1) identically to author
%Control: production of article title (0) allowed
%Control: page (0) single
%Control: year (1) truncated
%Control: production of eprint (0) enabled
%

\end{document}